\title{City-Scale Holographic Traffic Flow Data based on Vehicular Trajectory Resampling}
\author[1,2]{Yimin Wang}
\author[1,2]{Yixian Chen}
\author[1,2]{Guilong Li}
\author[1]{Yuhuan Lu}
\author[1,3]{Zhi Yu}
\author[1,2,*]{Zhaocheng He}
\affil[1]{Research Center of Intelligent Transportation System, SUN YAT-SEN University, Guangzhou, 510006, PRC}
\affil[2]{Guangdong Provincial Key Laboratory of Intelligent Transportation System, Guangzhou, 510275, PRC}
\affil[3]{Joint Research and Development Laboratory of Smart Policing in Xuancheng Public Security, Xuancheng, 242000, PRC}
\affil[*]{corresponding author: Zhaocheng He (hezhch@mail.sysu.edn.cn)}
\begin{abstract}
Despite abundant accessible traffic data, researches on traffic flow estimation and optimization still face the dilemma of detailedness and integrity in the measurement.
A dataset of city-scale vehicular continuous trajectories featuring the finest resolution and integrity, as known as the holographic traffic data, would be a breakthrough, for it could reproduce every detail of the traffic flow evolution and reveal the personal mobility pattern within the city.
Due to the high coverage of Automatic Vehicle Identification (AVI) devices in Xuancheng city, we constructed one-month continuous trajectories of daily 80,000 vehicles in the city with accurate intersection passing time and no travel path estimation bias.
With such holographic traffic data, it is possible to reproduce every detail of the traffic flow evolution.
We presented a set of traffic flow data based on the holographic trajectories resampling, covering the whole city, including stationary average speed and flow data of 5-minute intervals and dynamic floating car data.
\end{abstract}
\theoremstyle{definition}
\newtheorem{definition}{Definition}[section]
\newtheorem{thm}{Theorem}[section]
\begin{document}

\flushbottom
\maketitle

\thispagestyle{empty}

\noindent Key words: Automatic Vehicle Identification, holographic traffic data, trajectory resampling, virtual traffic measurement.

\section*{Background \& Summary}

The hologram technology \cite{GABOR1948} uses continuous media to record the optical information of objects whose three-dimensional light field can be reproduced afterward. 
Analogously, in this paper, the holographic data of the traffic flow is defined as the global information of all vehicles' dynamics, i.e., the trajectories of each vehicle in the traffic flow.
And the ability to reproduce accurate traffic flow on a city-wide scale has significant implications for real-world traffic control, path planning, and decision-making process. 

Identification devices such as GPS or RFID could record individual trajectories.
However, they can hardly capture all vehicles' trajectories due to their low penetration and spatial sparsity, respectively.
On the other hand, an Automatic Vehicle Identification (AVI) \cite{bernstein1993automatic} device is able to capture the identity and the timestamp of vehicles when passing by a specific checkpoint on the road.
With the growing number of traffic cameras, AVI detectors are implemented in almost every intersection in Chinese cities.
And one can obtain timestamped location sequences of all vehicles benefit from wide distributed AVI detectors on the road network.

With such comprehensive identified traffic data, it is possible to generate the holographic trajectories by enriching details of traffic flow dynamics.
This paper presents a method to reconstruct trajectories of vehicles from discrete serials of AVI observations.
Based on the reconstructed trajectories, we propose a sampling method on traffic flow data to simulate the detecting processes from both views of Eulerian and Lagrangian traffic flow observations, such as traffic count detection by loop detectors and real-time position detection by floating cars.

Moreover, the proposed methods are implemented in Xuancheng, China.
With 97\% of intersections equipped with AVI devices, the system captures almost every vehicular movement on the road network, daily producing 4 million records.
In this case, Xuancheng might be known as the first city empowered with the insight of all-field round-the-clock vehicular trips.
Considering the risk of personal information leaking, researchers are encouraged to collect cross-sectional aggregating data and limited vehicular trajectories through a supervised interactive virtual traffic measurement service.

Such resampled traffic data could support various of transportation-related researches. 
For instance, 1) consistent multi-source detected data could be resampled from the holographic dataset for data fusion research; 2) mobility patterns could be found from full sampled individual trip data; 3) optimal planning of traffic detectors deployment could be tested by placing custom virtual detectors on the data platform.

\section*{Methods}

The AVI technology is widely used in traffic enforcement cameras to automatically identify vehicles involving traffic violations \cite{sun2017simultaneous}, saving numerous human works to recognize license plates from raw images. 
Generally, active AVI detection identifies and records every vehicle passing the checkpoint \cite{yu2013utilizing}, even those not involving traffic violations. 
Thus, each vehicle on the road network would generate a trajectory constituted by a series of identifying records known as license plate recognition (LPR) data \cite{zhan2015lane}. 

However, in the early days, the AVI deployment coverage and license recognition accuracy are not enough to get precise travel paths.
Hence, some of the researches focused on original-destination (OD) reconstruction \cite{Asakura2000, Zhou2006}.
With the significant development of dynamic AVI technology and the wide deployment of AVI cameras, it is possible to reconstruct the intact travel chain using successive LPR records \cite{rao2018origin, khare2019novel}.
Moreover, deep learning algorithms like GNN are employed to reduce uncertainties in identifying vehicles in recent research \cite{Tong2021, li2018diffusion}.

Although the above methods provide plausible solutions to trip reconstruction, path estimation errors are introduced due to the limited AVI coverage.
The estimation accuracy mainly depends on a certain coverage rate, as known as the proportion of AVI-equipped intersections in the whole road network. 
The higher coverage of AVI-equipped intersections implies that there are fewer trip paths to reconstruct.
With the benefit of this high coverage, we could get promising results from some simple and effective reconstruction algorithms.

Therefore, the generic workflow for generating the holographic trajectories and the related resampled data is depicted in Fig \ref{fig:workflow}.
Two main procedures (P1 \& P2) turn the discrete raw LPR data into continuous trajectories through the workflow.
Trip measurement turns the partial observable LPR data into segmental trip data with certain paths on a constructed full-sensing road network (FSRN).
Then trajectory reconstructing interpolation is applied to each segment to form the holographic traffic flow data \cite{wang2018review}.
Finally, one can run virtual traffic detection (P3) on holographic trajectories and resample various traffic flow data.

\begin{figure}[H]
\centering
\includegraphics[width=0.8\textwidth]{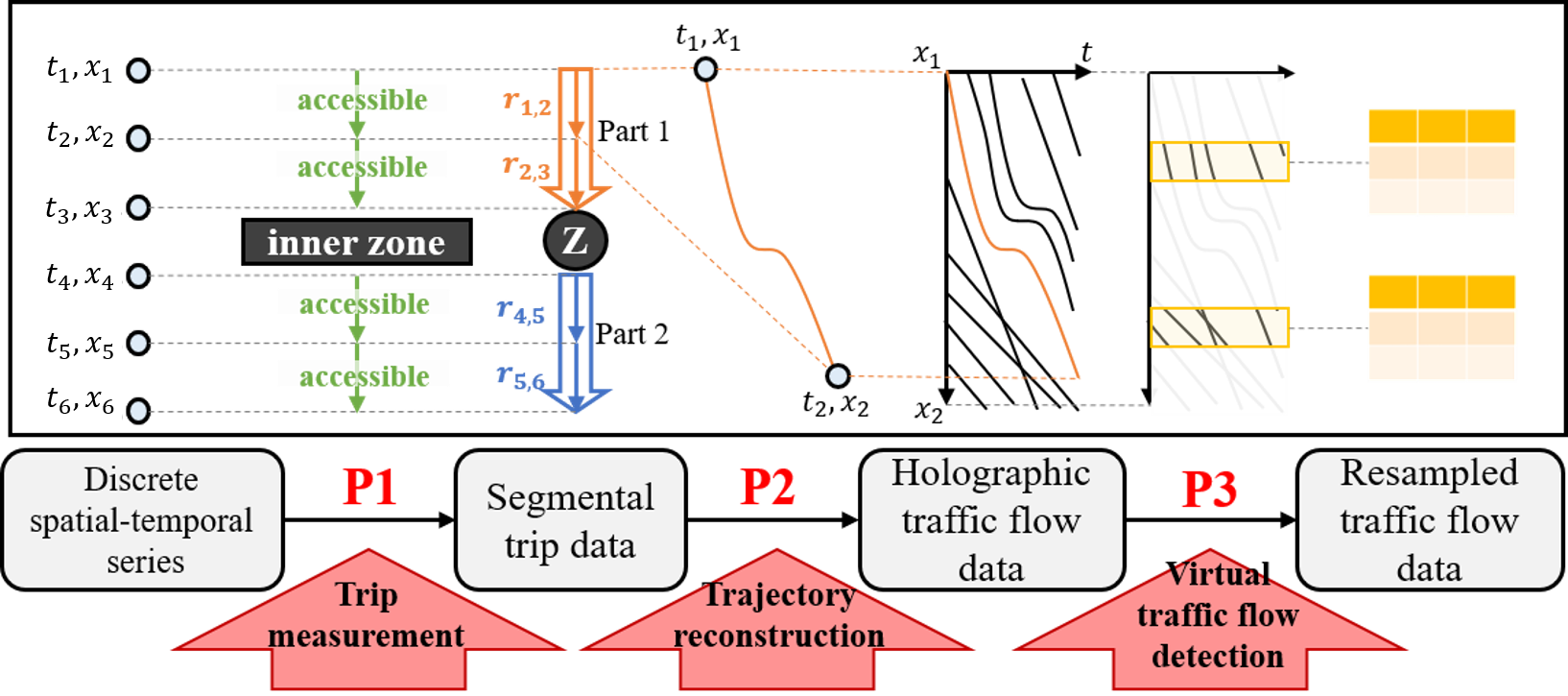}
\caption{Data processing workflow}
\label{fig:workflow}
\end{figure}

\subsection*{Road Network Description}

To avoid path estimation error, the trajectory reconstruction is conducted on a well-defined road network on which the LPR data are mapped.
This paper describes the physical road network (PRN) as a directed graph, denoted as $G^*\left(N^*,S^*\right)$.
The other related notation is in Table. \ref{tab:notations}.
There should be at most one trip path for any serial of LPR records to guarantee no path estimation bias, i.e., $m(A_{s,t}) \in \{0,1\}$.
Let $N^A$ be the set of the AVI-equipped intersections.
It is clear that $N^A \subseteq N^*$.
Assuming an ideal circumstance that $N^A = N^*$, all the trip paths on the physical network can be observed.

When $N^A \subset N^*$, it is still possible to capture all of the trips, as long as the following full-sensing condition is satisfied.

\begin{definition}[Full-sensing road network (FSRN)]
A full-sensing road network (FSRN) is a road network graph that among all the paths between any two different AVI-equipped intersections, there is no more than one path with non-AVI-equipped intersections.
\end{definition}

It guarantees that the path between two consecutive LPR records is determined.
Details of the full-sensing theorem are in Appendix \ref{ap:fsrn}.
This theorem demonstrates that it is unnecessary to deploy an AVI detector on each intersection to get the full-sensing condition.

Let LPR data be $\bm{a}_i=(t_i,x_i)$, containing the timestamp and one-dimensional location of a vehicle passing Node $i$.
Then the record of the trip $r_{1,j}$ consists of a serial of spatial-temporal locations, i.e., $\bm{a}_{1,j}=\{\bm{a}_1,\bm{a}_2,...\bm{a}_i,\bm{a}_j\}$. 
Such as consecutive LPR records $\{\bm{a}_B, \bm{a}_D\}$ in Fig. \ref{fig:deployment}, the path $r_{B,D}=\{B,E,D\}$ can be determined regardless of missing detection.

\begin{figure}[H]
\centering
\includegraphics[width=0.5\textwidth]{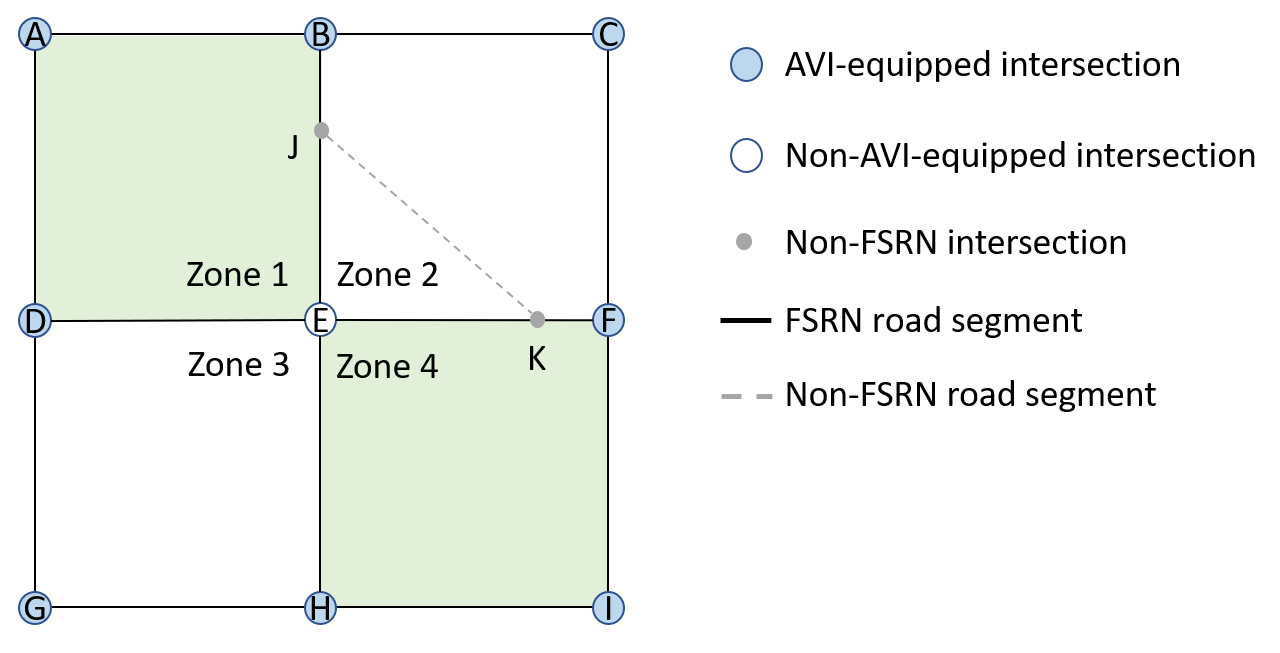}
\caption{Demonstration of the road network and AVI deployment}
\label{fig:deployment}
\end{figure}

Generally speaking, if the PRN fails the full-sensing condition, the challenge is to construct an FSRN according to the locations of AVI-equipped intersections.
The idea is to extract an FSRN from the physical network by eliminating some road segments and intersections.
Then a trip on PRN would be divided into two parts, including on-FSRN parts and off-FSRN parts.
For instance, a trip $r_{A,I}=\{A,B,J,K,F,I\}$ in Fig. \ref{fig:deployment} would be divided into $r_{A,B}=\{A,B\}$, $r_{B,F}=\{B,J,K,F\}$, and $r_{F,I}=\{F,I\}$, where $r_{B,F}$ is the off-FSRN part.
Furthermore, The closed traffic zone is constructed to keep the off-FSRN parts in a particular area.

\begin{definition}[Closed traffic zone]\label{def:zone}
A closed traffic zone is an area bounded by FSRN road segments, and for any non-FSRN segments in the zone, their connected segments are also within the zone area.
\end{definition}

In this way, a trip on the physical road network might be represented by several parts on FSRN separated by staying or mobility within the traffic zones.
The trip $r_{A,I}$ mentioned above could be represented by inter-zone movements $r_{A,B}$, $r_{F,I}$, and inner-zone activity $r_{B,F}$.
Related details can be found in Appendix \ref{ap:zone}.

In order to obtain vehicular movements as high resolution as possible from an AVI-fixed-locating road network, the challenge is to minimize the area of the traffic zones by constructing a suitable sensing network under the constraint of the full sensing criterion.
Additionally, more AVI implemented intersections indicate more resemblance to the FSRN and the PRN. 
Thus more detailed activities can be captured, i.e, $N^A \to N^*$, $FSRN \to PRN$.
A current sensing network of Xuancheng city is shown in Fig. \ref{fig:area}.
In Xuancheng city, the AVI installation rate among the intersections is 97\%.

Despite such an almost ideal trip observation in Xuancheng, the trajectory reconstruction is still a problem of interpretation for observed passing time at both the upstream and downstream ends of a road segment.
For trajectories, the turning directions on each intersection could be easily inferred by downstream LPR records, while their exact lanes are hardly recognized. 
Thus, the traffic flow dynamic would be described by the turning stream on each intersection, rather than different lanes on the road segments \cite{robertson1969transyt}.
For vehicular dynamics within the road segment $s_{i,j}$ of the trip, the trajectory $(t,x_{i,j})$ between $\bm{a}_i$ and $\bm{a}_{j}$ can be calculated as follows: 
\begin{linenomath*}
\begin{equation} \label{eq:int_x}
    x_{i,j}(t) = x_i + \int_{t_i}^{t} v(t)dt, t \in (t_i,t_j), x \in (x_i, x_j)
\end{equation}
\end{linenomath*}
For observation $\bm{a}_{1,j}=\{\bm{a}_1,\bm{a}_2,...\bm{a}_i,\bm{a}_j\}$, let $X=\{x_{1,2},...,x_{i,j}\}$, which represents the set of vehicular trajectories on each segment of trip $r_{1,j}$.
Then the goal is to reconstruct $X$ based on the spatial-temporal trip records $\bm{a}_{1,j}$.

\begin{figure}[H]
\centering
\includegraphics[width = 1 \textwidth]{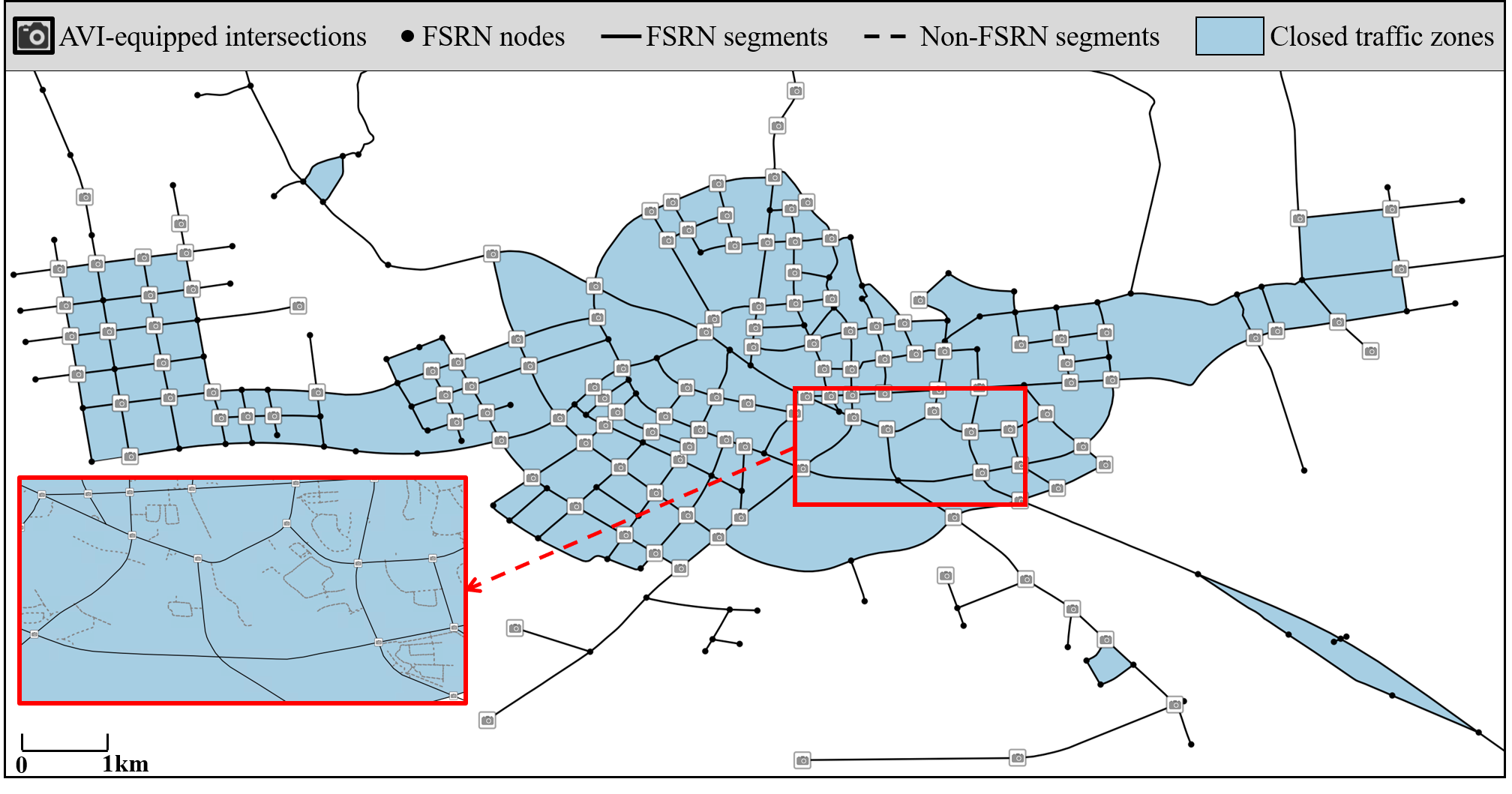}
\caption{Road network and AVI distribution of Xuancheng city}
\label{fig:area}
\end{figure}

\begin{table}[H]
\flushleft
\begin{tabular}{|l|l|}
\hline
Notation & Description \\
\hline
$G^*\left(N^*,S^*\right)$ & The graph of PRN\\ 
\hline 
$s_{i,j}$ & The road segment from Node $i$ to Node $j$\\ 
\hline 
$r_{s,t}=\{s, 1, 2, ..., t\}$ & A trip path from Node $s$ to Node $t$\\ 
\hline
$R_{s,t}$ & The set of $r_{s,t}$\\ 
\hline 
$n(R_{s,t})$ & the number of the possible  trip path from Node $s$ to Node $t$ \\
\hline
$\bm{a}_{s,t}=\{\bm{a}_s,\bm{a}_1,\bm{a}_2,...,\bm{a}_t\}$ & A serial of consecutive LPR data\\ 
\hline 
$m(\bm{a}_{s,t})$ & The number of the possible trip paths for $A_{s,t}$\\
\hline
\end{tabular}
\caption{\label{tab:notations} Description of notations.}
\end{table}

\subsection*{Trip Measurement}

As shown in the workflow (Fig. \ref{fig:workflow}), to get trip-based spatial-temporal serials, a trip dividing algorithm is required.
The basic procedure is to determine whether two consecutive records belong to the same trip. 
In this paper, we use the travel time of a vehicle passing two AVI-equipped intersections $i$ and $j$ as a spatial-temporal accessibility criterion.
\begin{linenomath*}
\begin{equation} \label{eq:Ha}
    H(s_{i,j},t_i,t_j) = \left\{
    \begin{array}{l@{\qquad}l}
        1 & l(s_{i,j}) / v_{min} > (t_j-t_i) \\
        0 & else
    \end{array}
    i,j \in V^A
    \right.
\end{equation}
\end{linenomath*}
where $l(s_{i,j})$ is the length of segment $s_{i,j}$, and $v_{min}$ is the minimal travel speed.
$H=1$ indicates that records $\bm{a}_i$ and $\bm{a}_j$ belong to one trip, while $H=0$ means at least one staying behavior between $\bm{a}_i$ and $\bm{a}_j$.

However, as shown in Fig. \ref{fig:deployment}, not each passing point in trip $r$ can be recorded by AVI detectors, such as $r_{B,D}=\{B,E,D\}$.
In other word, the observation could be a subset of the trip records, i.e.,  $\bm{a}^o = \{\bm{a}_i|i \in N^A\}, \bm{a}^o \subseteq \bm{a}$. 
In between of two observed passing points, the passing time of the non-AVI points shall be inferred in the algorithm.
Taking into account the non-AVI passing points and accessibility criteria in Eq. \ref{eq:Ha}, an algorithm for for trip measurement is proposed ( details in Appendix \ref{ap:pg_alg}).
The idea is that, one can use Eq. \ref{eq:Ha} to judge accessibility on segment $s_{k,k+1}$ between the green light phase $\tau_k=[g_k^{start}, g_k^{end}]$ and $\tau_{k+1}=[g_{k+1}^{start}, g_{k+1}^{end}]$.
\begin{linenomath*}
\begin{equation}
    G(\tau_k, \tau_{k+1})=H(s_{k,k+1},g_k^{end},g_{k+1}^{start})
\end{equation}
\end{linenomath*}
Then we can search accessible downstream green light phases into a set $T_k$ as depicted in Fig. \ref{fig:dn_pgraph}, iteratively.
The downstream searching process runs for $i+1 \leq k \leq j$, and generates the potential passing graph $P_{i,j}(T,E)$ in which the edges indicates two consequent passing phases.
For each accessible phase in layer $T_j$, we can pick the one in which $t_j \in [g_j^{start}, g_j^{end}]$ fits as a proved set of phases $T^*_j$.
As for edges, update the proved edge set as follows.
\begin{linenomath*}
\begin{equation*}
    E^*_{j-1,j} = \{e_{j-i,j} | \tau_j \in T^*_j \}
\end{equation*}
\end{linenomath*}
By updating proved phases and edges in turns from the downstream end to the upstream end, we can trim the graph into a accessible passing graph $P^*_{i,j}(T,E)$ for path from node $i$ to $j$.(Fig. \ref{fig:up_pgrap}).

\begin{figure}[H]
    \centering
    \begin{subfigure}{0.4\textwidth}
         \centering
         \includegraphics[width=\textwidth]{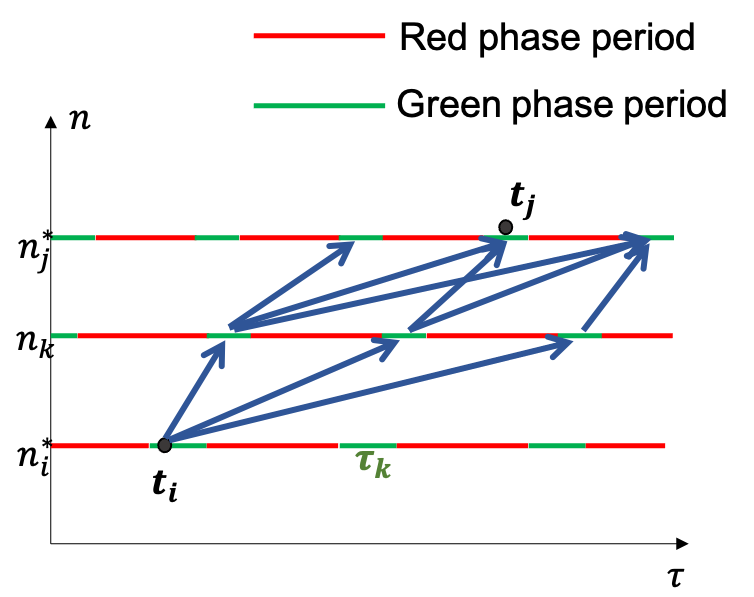}
         \caption{Illustration of searching for potential passing green light phases}
         \label{fig:dn_pgraph}
     \end{subfigure}
     \begin{subfigure}{0.4\textwidth}
         \centering
         \includegraphics[width=\textwidth]{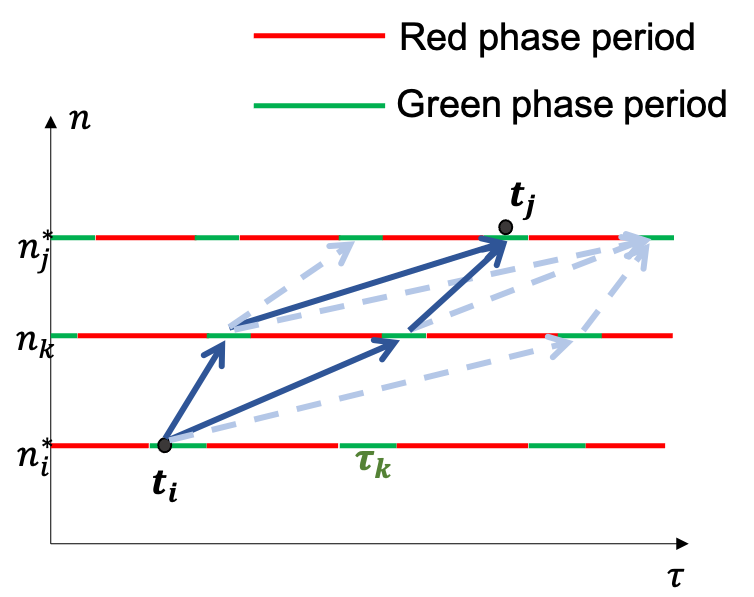}
         \caption{Illustration of accessibility judgment for potential passing green light phases}
         \label{fig:up_pgrap}
     \end{subfigure}
    \caption{Illustrations of inference for passing green light phases of intersections from $i$ to $j$}
    \label{fig:trip_msm}
\end{figure}

Note that AVI detectors might failed recognizing a small portion of the passing vehicles due to poor visual conditions. 
For instance, assuming missing observation $\bm{a}_A$ on trip $\bm{a}=[\bm{a}_B, \bm{a}_A, \bm{a}_D]$ in Fig. \ref{fig:deployment}, the passing-time inference algorithm would be applied for path $[n_B,n_E,n_D]$ since it is the only path between B and D without any AVI-equipped intersections.
If the signals on E did not fit in, such situation would causes trip chain disconnection $(P^*_{i,j}=(\emptyset,\emptyset))$.
Otherwise, it would be a false match.
Therefore, the accuracy of the AVI detection is important to the trip measurement.

\subsection*{Vehicular Trajectories Reconstruction}
The traffic streams consist of the vehicles of the same turning on the road segment.
The dynamics in the same stream would be described as stop-and-go waves caused by the signal periods on the downstream end.

A demonstration of vehicular trajectories in the traffic stream is shown in Fig. \ref{fig:tr_demo}.
The green and red bars on $x=x_j$ represent green and red phases in the signal circles.
Furthermore, the wave's speed is determined by the vehicle queuing state and releasing state of the traffic flow, i.e.,
\begin{linenomath*}
\begin{equation}\label{eq:wave_spd}
    w = - q_m / (k_j-k_m)
\end{equation}
\end{linenomath*}
where $q_m$ is the capacity, $k_m$ is the density under capacity, and $k_j$ is the jammed density. 
In order to calculate vehicular trajectories in Eq. \ref{eq:int_x}, such as the 5 vehicles in Fig. \ref{fig:tr_demo}, the solution of $v(t)$ is formulated as a piecewise function.
\begin{linenomath*}
\begin{equation} \label{eq:piece_spd}
    v(t) = \left\{
                \begin{array}{l@{\qquad}l}
                    v_1 & t_i \leq t < t_1 \\
                    0 & t_1 \leq t < t_2 \\
                    \vdots & \\
                    0 & t_{k-1} \leq t < t_k \\
                    v_j & t_k \leq t \leq t_j
                \end{array}
           \right.
\end{equation}
\end{linenomath*}

\begin{figure}[H]
    \centering
    \includegraphics[width=0.8\textwidth]{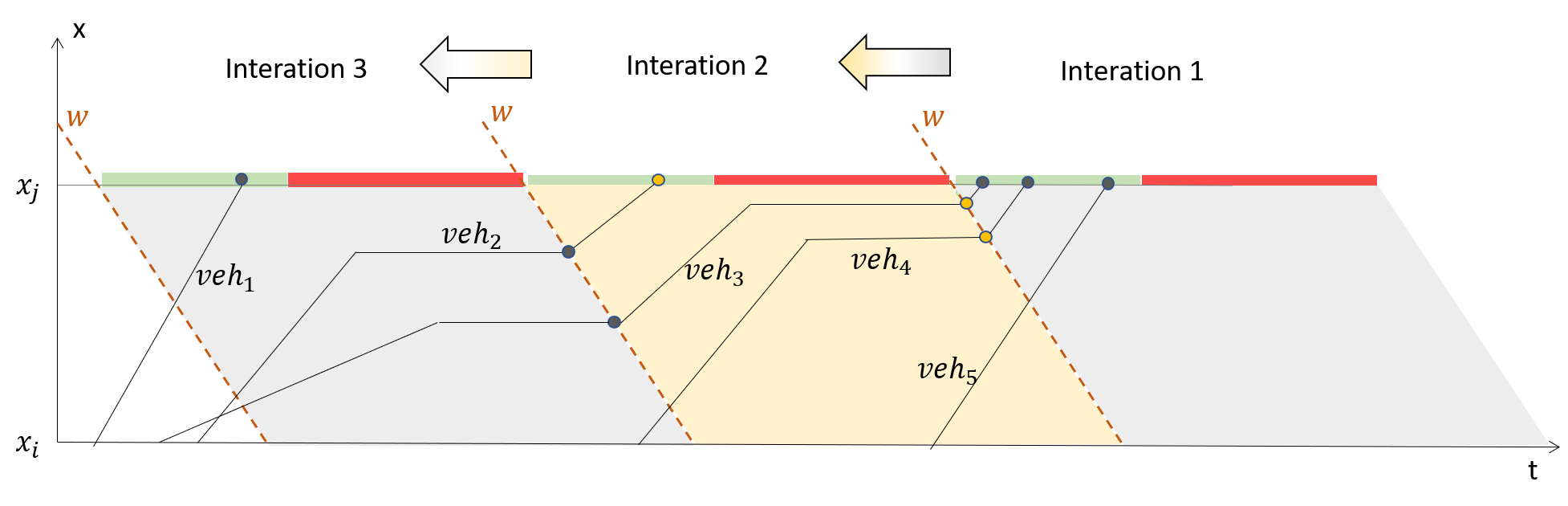}
    \caption{Demonstration of backward trajectory reconstruction from leaving time at $x_j$ to entry time at $x_i$ on a road segment}
    \label{fig:tr_demo}
\end{figure}

To gain the solutions, a backward procedure of trajectories reconstruction is proposed for each passing vehicle, calculating from the downstream to the upstream of the traffic flow.
Hence, the reconstruction begins at the last signal period and iterates by signal circles. 
In other words, the $v(t)$ is calculated from $v_j$ to $v_1$.
Each iteration starts with observations of the passing vehicles in the current period and the remaining ones from the former iteration,  resulting in the new reconstructing states of these vehicles. 
For instance, Iteration 2 in Fig. \ref{fig:tr_demo} contains remained vehicles ($veh_{3,4}$) and passing vehicle(s) ($veh_2$).
At the end of the iteration, $veh_4$'s trajector has been constructed, while trajectories of ($veh_{2,3}$) remained undone and passed to Iteration 3.
The key is to distinguish queued vehicles from non-queued ones.
Then we can complete the trajectories of the non-queued vehicles, leaving the queued ones to the subsequent iterations.
Details of the reconstruction method are in Appendix \ref{ap:tra_rec}.

\subsection*{Virtual Traffic Flow Detection}

With the holistic reconstructed trajectories, the holograph of the city-scale mobility can be acquired.
Note that such a high-resolution individual mobility dataset implies a high risk of personal information being abused.
Thus it is restricted to access the generated raw trajectories directly.
As an alternative, numerical traffic flow detection is applied.
In reality, the traffic flow can be observed from both Eulerian and Lagrangian perspectives. 
Analogously, the reconstructed dataset supports both cross-sectional and vehicular detection.

\begin{figure}[H]
    \centering
    \includegraphics[width=0.6\textwidth]{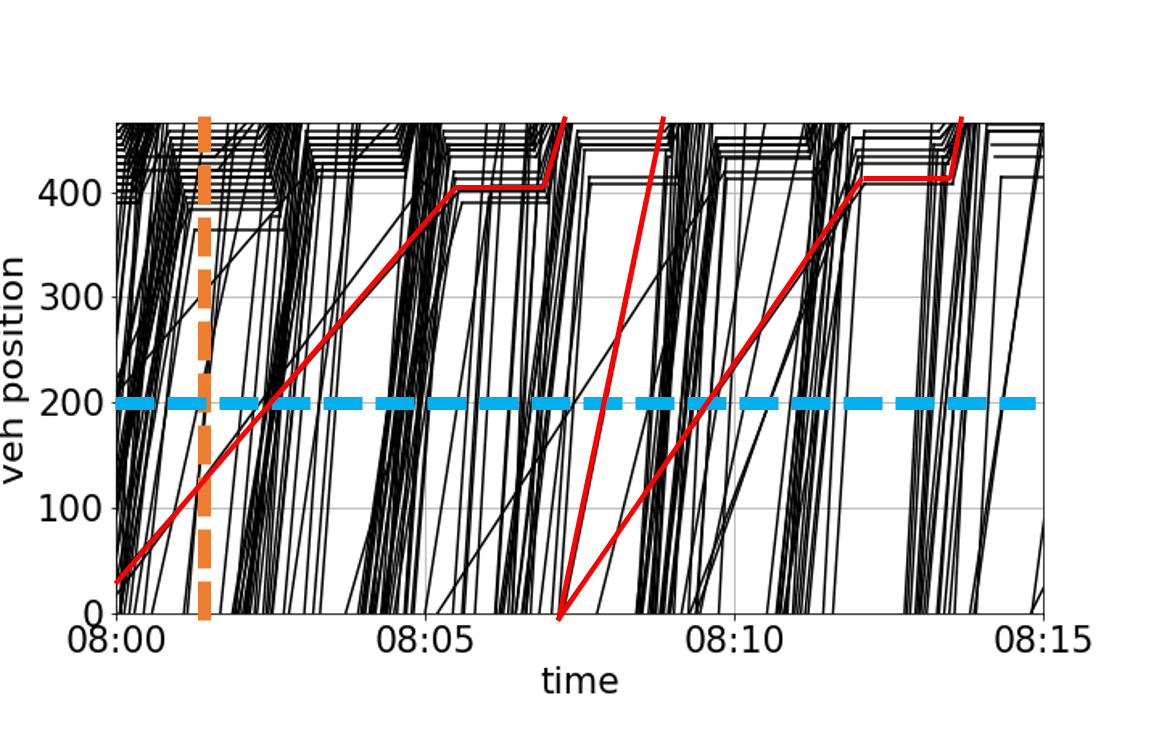}
    \caption{Illustration of virtual traffic flow detection including loop detection (dash line) and floating car detection (red solid line). }
    \label{fig:resamples}
\end{figure}

\subsubsection*{Numerical stationary detection}

For stationary observation, traditional loop data can be simulated by counting intersections of the curves of trajectories crossing the horizontal loop location line as the blue dash line in Fig. \ref{fig:resamples}.
Moreover, the occupancy and velocity can be measured according to the loop's length.
Additionally, segmental measurement could be employed, which detects the instant density (as on purple line) and the space-mean speed (as in orange frame) of the traffic flow as the orange dash line in Fig. \ref{fig:resamples}.
The missing rate is introduced in the loop data resampling process to simulate the systematic detecting error in realistic circumstances.
Each vehicle counting is taken as a Bernoulli trial having the missing rate as the possibility of failure.

\subsubsection*{Virtual floating car detection}

The sample rate controls the penetration of vehicular trajectories resampling, resulting in the red trajectories in Fig. \ref{fig:resamples}.
In order to balance the data utility and personal privacy protection, only the trajectories of commercial vehicles are included in the dataset.
The proportion of commercial vehicles is about 4.5\% to 7\% depends on the time.
Moreover, all of the license numbers are substituted with their unique and irreversible hash code. 

\section*{Data Records}

We provide three types of data to support different research interests:
\begin{itemize}
    \item Short-term anonymized original LPR data
    \item Long-term encrypted reconstructed holographic trajectory data
    \item Long-term resampled traffic data, including loop data and floating car data (FCD) based on the holographic trajectories
\end{itemize}
All of the data are available at the Figshare\cite{figshare2022} repository.

We limit the original LPR data because of the risk of personal information leaking, even if the data are anonymized.
With travel characteristics revealed in the long-termed holographic trajectories, one can still recognize the personal identification using additional data, such as parking lot data.
Hence, it is necessary to encrypt the trajectory data.

However, the long-term resampled traffic data could be used as the primary support for the related research, which could meet most of the needs.
For supplemental use, others can customize their detectors' settings and implement virtual traffic flow detection using the attached resampling software and the encrypted holographic trajectories.
To those interested in the reconstruction method, the short-term anonymized original LPR data could be used for validation.
Details of the three types of data are described as follows.

1) The city-scale loop data and FCD are the one-month long resampling results of the Xuancheng holographic data in Sept. 2020. 
The link-based graph is given in Table. \ref{tab:rdnet} for road network description, including the whole 578 road segments of the city.
The loop dataset provides the 5-minute aggregated flow-speed data, as shown in Table. \ref{tab:loop}.
The floating car dataset includes the trajectories of 500 commercial vehicles are in Table. \ref{tab:fcd}, which is sampled every 10 seconds. 
Their unique IDs can be found in the data repositories.

2) The encrypted holographic trajectories can not be accessed directly; however, one can obtain the self-customized results by using the attached resampling software.
The usage can be found in the following Usage Notes, and the source code of the software is available, see in Code Availability.

3) The short-term original LPR data for reconstruction validation are shown in Table. \ref{tab:lpr}, while the source code of the reconstruction can be found in Code Availability.
The LPR data are collected from 7:00 to 8:00 on a workday morning in Xuancheng.

\begin{table}[H]
\flushleft
\begin{tabular}{|p{2.5cm}|p{14cm}|}
\hline
Column name & Description \\
\hline
ROADID & The ID of road segment, composed of the upstream node ID and the downstream node ID\\ 
\hline 
LANENUM & The number of the lanes of the road segment end\\ 
\hline 
TURN & directions of every downstream road, separated by \# \\ 
\hline 
DN\_ROAD & Road IDs of every downstream road, separated by \#\\ 
\hline 
GEOM & String of geometry objects. \\
\hline 
LEN & Length of road segment in meters. \\
\hline
\end{tabular}
\caption{\label{tab:rdnet} Road network data attributes.}
\end{table}

\begin{table}[H]
\flushleft
\begin{tabular}{|p{2.5cm}|p{14cm}|}
\hline
Column name & Description \\
\hline
ROAD\_ID & The ID of road segment, composed of the upstream node ID and the downstream node ID\\ 
\hline 
FTIME & The beginning timestamp of the interval\\ 
\hline 
TTIME & The ending timestamp of the interval\\ 
\hline 
INT & Data aggregating interval (s)\\ 
\hline 
COUNT & The number of all passing vehicles\\ 
\hline 
REG\_COUNT & The number of regular vehicles\\ 
\hline 
LAR\_COUNT & The number of large vehicles\\ 
\hline 
ARTH\_SPD & The arithmetic mean of vehicle speed (km/h)\\ 
\hline 
HARM\_SPD & The harmonic mean of vehicle speed (km/h)\\ 
\hline 
TURN & The turning direction of the stream, S/L/R/U/Unknown represent straight, left, right, U-turn, and no downstream movements, respectively\\
\hline
\end{tabular}
\caption{\label{tab:loop} Loop data attributes.}
\end{table}

\begin{table}[H]
\flushleft
\begin{tabular}{|p{2.5cm}|p{14cm}|}
\hline
Column name & Description \\
\hline
VID & The ID of vehicles\\ 
\hline 
TYPE & Vehicle types: 1 for large vehicles, 2 for regular vehicles\\ 
\hline 
TIME & Trajectory recorded time \\ 
\hline 
LON & Longitude of the vehicle position\\ 
\hline 
LAT & Latitude of the vehicle position\\ 
\hline 
SPD & Vehicle speed\\ 
\hline 
TURN & The turning direction of the vehicle, S/L/R/U/Unknown represent straight, left, right, U-turn, and no downstream movements, respectively\\ 
\hline 
DIS & Distance from vehicle position to downstream end of the road segment\\ 
\hline 
ROADID & Road segment ID\\ 
\hline 
\end{tabular}
\caption{\label{tab:fcd} Floating car data attributes.}
\end{table}

\begin{table}[H]
\flushleft
\begin{tabular}{|p{2.5cm}|p{14cm}|}
\hline
Column name & Description \\
\hline
VID & The ID of vehicles\\ 
\hline 
FROAD & Road ID of the former passing moment\\ 
\hline 
TROAD & Road ID of the latter passing moment\\ 
\hline 
FTIME & Timestamp of the former passing moment\\ 
\hline 
TTIME & Timestamp of the former passing moment \\ 
\hline
\end{tabular}
\caption{\label{tab:lpr} LPR data attributes.}
\end{table}

\section*{Technical Validation}

The generated traffic flow profile of morning peak is revealed in Fig. \ref{fig:link_flow)}.
The number of passing vehicles is visualized on the map by the width of the blue shades.
It presents the radial distribution of the traffic flow.
To demonstrate the validity of the generated data, we compared the data with different sources to test the consistency in between. 
Also, the characteristics of the generated data are analyzed.
Several data profiles are drawn from the flow-based perspective and trip-based perspective, respectively.

\begin{figure}[H]
    \centering
    \includegraphics[height=8cm]{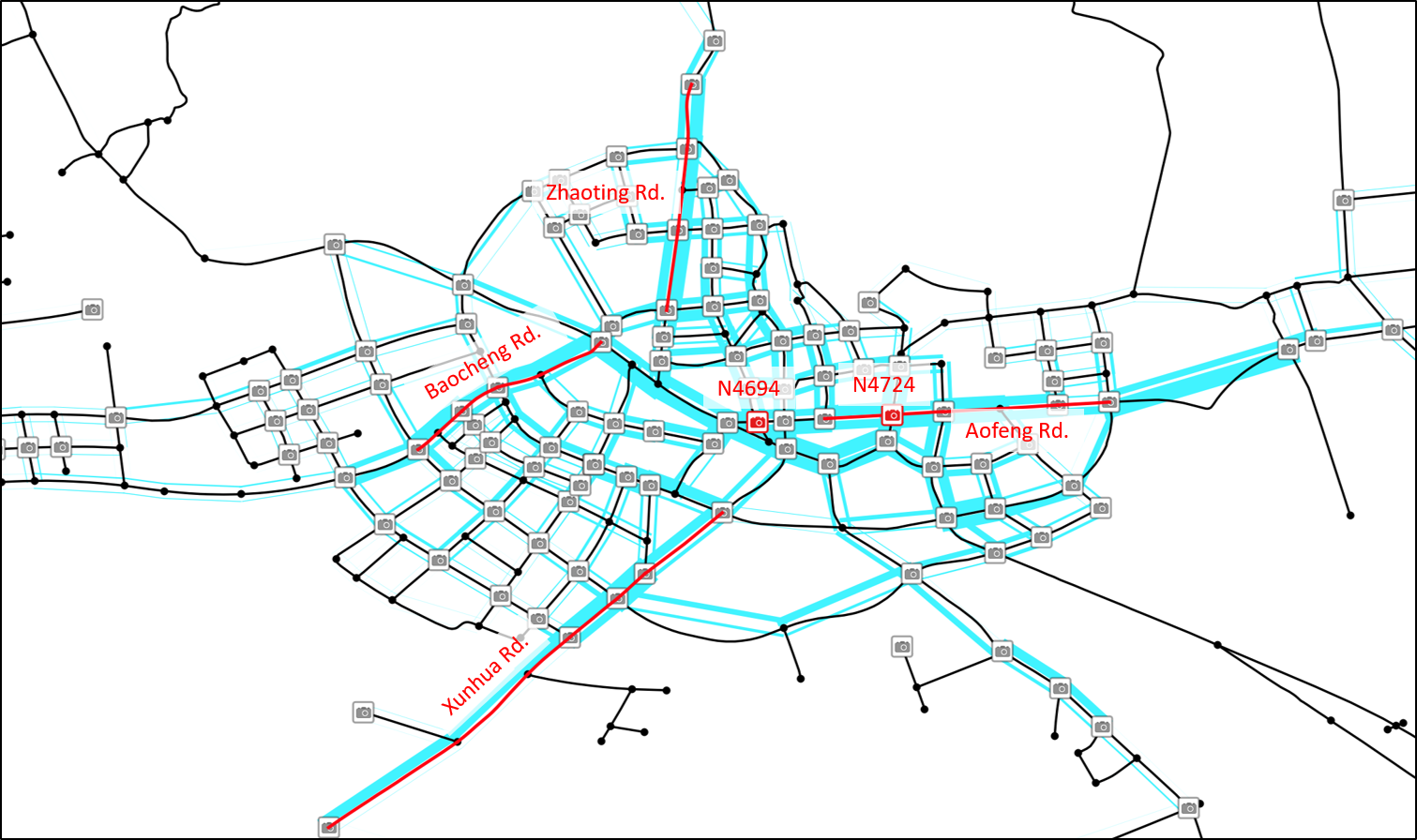}
    \caption{Morning peak traffic in Xuancheng city. The width of the blue shades represents the number of vehicles. Road segments (Zhaoting, Baocheng, Xuanhua, and Aofeng Rd.) and intersections (N4694 \& N4724) in red are to be validated.}
    \label{fig:link_flow)}
\end{figure}

\subsection*{Flow-based Perspective}

The flow-based validation includes comparing the traffic flow data on red-marked roads against another observation and analyzing the generated fundamental diagram.

Fig. \ref{fig:count_valid} depicts the resampled count numbers and the manual results of the southern in-coming stream on intersection N4724.
The resampled data on intersection N4694 and N4724 are compared to the on-sited manual observation, considering vehicles from each in-coming road segment from 11:00 to 12:00 on Sept. $15^{th}$, 2020.
The correlative coefficient is 0.748 with $RMSE = 4.3 veh/min$, which shows the consistency. 

Furthermore, the travel time data on Aofeng Rd., Zhaoting Rd., Baocheng Rd., and Xunhua Rd. are compared to the dynamic estimated results from the Amap.
Since some smooth filters and delays on intersections are usually applied in travel time estimation algorithms, the estimated results are likely different from the raw detected ones.
In this paper, the weekly averaged and zero-mean normalized travel time series are proposed.
Fig. \ref{fig:trt_valid} shows the result on Zhaoting Rd., demonstrating the daily deviation of travel time from the average.
Generally speaking, the overall averaging daily travel time is similar to the estimated result by Amap with the correlative coefficient of 0.749.

As for the fundamental diagram, the profile on Aofeng Rd. is shown in Fig. \ref{fig:fd_valid}.
The resampled data fits the speed-density model in Eq. \ref{eq:sd}, having the capacity of $0.36 veh/s$ and the jam density of $0.19 veh/m$, which indicates that the generated traffic flow is of the same character as the realistic one.

\begin{figure}[H]
    \centering
    \begin{subfigure}{0.33\textwidth}
        \centering
        \includegraphics[width=\textwidth]{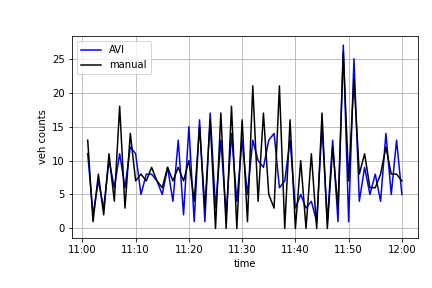}
        \caption{Comparison of traffic counts of southern N4724 (the holographic dataset V.S. on-sited manual observations)}
        \label{fig:count_valid}
    \end{subfigure}
    \begin{subfigure}{0.33\textwidth}
        \centering
        \includegraphics[width=\textwidth]{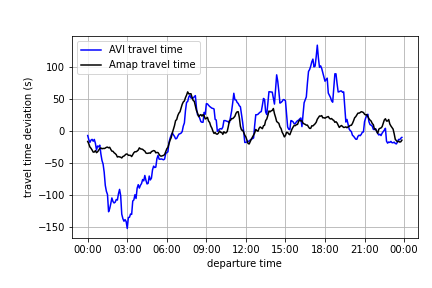}
        \caption{Comparison of travel time on Zhaoting Rd. (the holographic dataset V.S. Amap data)}
        \label{fig:trt_valid}
    \end{subfigure}
    \begin{subfigure}{0.33\textwidth}
        \centering
        \includegraphics[width=\textwidth]{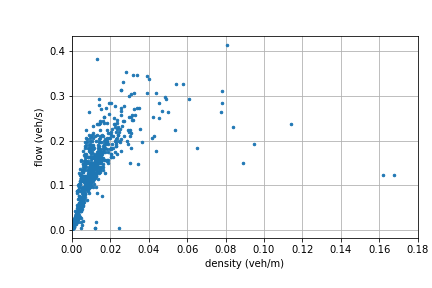}
        \caption{Fundamental diagram of Aofeng Rd.}
        \label{fig:fd_valid}
    \end{subfigure}
    \caption{Validation from the flow-based perspective}
    \label{fig:flow_valid}
\end{figure}

\subsection*{Trip-based Perspective}
The trip-based analysis focuses on the spatial-temporal distribution of the travel demands.
The trip-based analysis is mainly according to the spatial-temporal concentration of the individual trips.
In this paper, the level of spatial concentration of individual travelers is evaluated by the number of different origin-destination zones (ODZ) in a month.
Meanwhile, the level of time concentration is determined by the number of different departure time sections (DTS).
As the individual trip is related to the specific traffic zone surrounded by the road segments, the number of different ODZ is easily counted.
Since departure time is a continuous variable, we conduct a DBSCAN clustering algorithm on each trip to spontaneously generate discrete departure time sections.
To avoid the long tail phenomena of spatial-temporal distribution, we take the $85^{th}$ percentile of the number of DTS and ODZ as the indicators of spatial-temporal concentrating characteristics.

Fig. \ref{fig:dt_pdf} shows the departure time distribution on weekdays of people in different DTS.
One can recognition a typical "Work-Home" commute pattern of those $DTS=2$, which has much higher peaks during commute time.
Besides, the curve of $DTS=4$ seems a "Work-Other-Work-Home" pattern and leads to a midday peak of traffic that does not exist in $DTS=2$ or $DTS=3$ curves.
As for $DTS=3$, there is a noticeable peak at around 20:00 and indicates a "Work-Other-Home" pattern. 
For $DTS=other$, one can find that there are four equivalent peaks at around 7:30, 11:30, 14:00, and 17:30, representing generally high frequent departure times.
Since $DTS=2,3,4$ show the comprehensive mobility patterns, the temporally concentrated travelers are defined as the ones with the $85th$ DTS in $[2,3,4]$.
Note that these patterns have up to four different OD zones.
Likewise, the spatially concentrated travelers are defined as the ones with the $85th$ ODZ less than 5.

Fig. \ref{fig:traveler_cdf} shows the Lorenz curve of travel distance in a month for all travelers, where the cumulative proportion of the travel distance is plotted against the cumulative proportion of individuals \cite{wittebolle2009initial}.
It reveals that mobility distribution on the road network is of the same pattern as other business behaviors.
Among all travelers, the commercial vehicles at the top 1\% of the population share nearly 20\% of the cumulative travel distances.

Some of the trips are predictable due to the traveler's comprehensive characteristics, such as the commuters, the spatially concentrated ones, and the temporally concentrated ones. 
Furthermore, we can estimate the movements of commercial vehicles since they are under surveillance.
These four types of travelers are defined as regular travelers whose patterns are recognizable.

In summary, the regular ones share 37\% of the whole travelers but form 45\% of the whole travel distance (Fig. \ref{fig:pie}).
Thus, once these 37\% regular travelers are well modeled, we can reproduce nearly half of the trips, and the other half might be generated with random methods.

\begin{figure}[H]
    \centering
    \begin{subfigure}{0.36\textwidth}
        \centering
        \includegraphics[width=\textwidth]{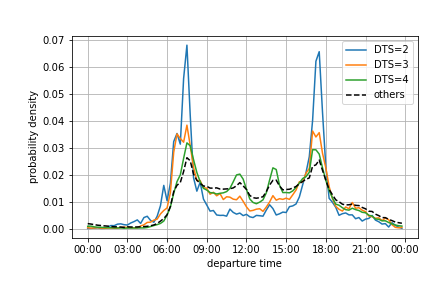}
        \caption{Departure time distribution for different mobility patterns}
        \label{fig:dt_pdf}
    \end{subfigure}
    \begin{subfigure}{0.24\textwidth}
        \centering
        \includegraphics[width=\textwidth]{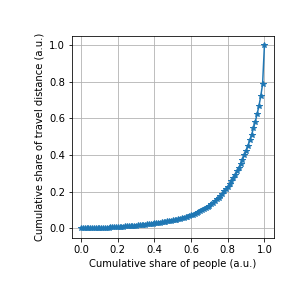}
        \caption{Lorenz curve of travel distance within a month for all travelers}
        \label{fig:traveler_cdf}
    \end{subfigure}
    \begin{subfigure}{0.36\textwidth}
        \centering
        \includegraphics[width=\textwidth]{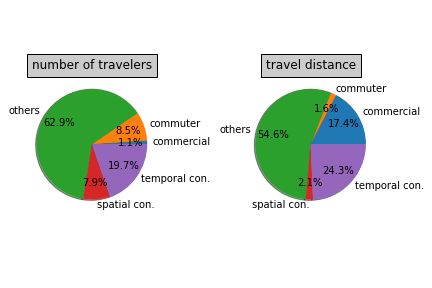}
        \caption{Pie chart of population and travel distance for different travelers, including commuters, commercial vehicles, temporally concentrated travelers, and spatially concentrated travelers}
        \label{fig:pie}
    \end{subfigure}
    \caption{Validation from the trp-based perspective}
    \label{fig:trip_prof_all}
\end{figure}

\section*{Usage Notes}

As mentioned above, there are three types of data we provide. 
The short-term LPR data and long-term resampled traffic data can be downloaded for static data usage.
On the other hand, the encrypted holographic trajectories can be used in the interactive measurement of the traffic flow.
Users can modify the virtual detecting environment and get customized virtual detection results.
In this way, we can offer the user-customized round-the-clock long-term traffic flow data to the most satisfactory resolution without exposing personal trajectories.

\subsection*{Static Dataset Usage}
 
The road network file can be imported into the PostGIS database or other supported GIS systems through QGIS.
The loop data of each road segment can be used for studying large-scale traffic data prediction.
By combining floating car data with the loop data, users could examine various data fusion models.
Moreover, the FCD data process script could help aggregate individual floating car samples into the segmental travel time.
As for LPR data, each row of the dataset is a pair of consecutive records captured by the AVI detectors.
One can rebuild the route between these two records with the road network.

\subsection*{Interactive Measurement Usage}

The resampling software is a command-line tool to implement virtual traffic flow detection in encrypted trajectories.
Users could tweak the settings in the running properties file and get resampled traffic data straight in the local output files.

In the properties file, users can set the road sections ("ftNode") and time ("fTime", "tTime") of the measurement and define the parameters of loop and floating car detection.
Users can switch on or off the floating car detection by setting the "needFCD" property to "true" or "false".
Furthermore, "fcdSamplingSec" denotes the FCD's sampling period (seconds).
For loop detectors, they are identified by the ID ("loopId"), detecting on the specified road segment ("ftNode").
The loop's position is determined by the property "position", which denotes the distance from the downstream end of the road.
The missing rate ("missingRate") and the aggregating interval ("interval") settings are available.

The software can run on Linux, Windows, and macOS systems using different launchers.
The command is simple as "osLauncher java -jar /path/to/resampling\_software -d /path/of/holoData -c /path/of/properties\_file".

Other details can be found in the "README" file.

\section*{Code availability}
To further describe the details of data processing in our method, we also provide code and instructions for reproducing the presented results \cite{githubrepo}.
In general, files that end with ".py" are supporting python module files, other files with ".ipynb" are written as Jupyter Notebook instruction, and the files under the folder "measurement" are the source code of the resampling software. 
The instruction files demonstrate the whole data processing workflow in Fig. \ref{fig:workflow}, including trip measurement, trajectory reconstruction, virtual traffic flow detection, and data validation.
These files can be used to better understand the modeling and validation steps.


\begin{appendices}
\section{Full-sensing theorem}\label{ap:fsrn}
Among all the paths between any two different AVI intersections in the study area, if there is no more than one path with non-AVI-equipped intersections, then the trip path for the LPR record is determined, i.e., 

\begin{thm}
$\forall i,j \in N^A$, Let $R = \{r_{i,j} | r_{i,j} \cap N^A = \{i,j\}\}$. If $n(R_{i,j}) \in \{0,1\}$, then $\forall p,q \in N^*$, $m(A_{p,q}) \in \{0,1\}$
\end{thm}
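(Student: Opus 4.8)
The plan is to reduce the global uniqueness claim to a local statement about the stretch of road between two temporally consecutive LPR detections, and then to rebuild the whole trip by concatenation. First I would fix arbitrary $p,q$ and a realizable LPR serial $A_{p,q}=\{\bm{a}_{p},\bm{a}_{i_1},\dots,\bm{a}_{i_{k-1}},\bm{a}_{q}\}$. Because active AVI detection records \emph{every} vehicle at \emph{every} equipped intersection it passes, the recorded nodes are exactly the AVI-equipped intersections that the trip visits; in particular $p,i_1,\dots,i_{k-1},q\in N^A$, and between two consecutive records the vehicle passes no further AVI node. Consequently any trip path $r_{p,q}$ compatible with the serial splits, along the observed detection nodes, into consecutive sub-paths $r_{i_\ell,i_{\ell+1}}$ each of which meets $N^A$ only at its own two endpoints, that is $r_{i_\ell,i_{\ell+1}}\in R_{i_\ell,i_{\ell+1}}$ in the notation of the statement.

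Next I would apply the full-sensing hypothesis one segment at a time. Recording a compatible trip by its ordered tuple of segments gives an injection from the set of trips compatible with $A_{p,q}$ into the product $\prod_{\ell} R_{i_\ell,i_{\ell+1}}$, since distinct trips must differ on some segment while the junction nodes $i_\ell$ are fixed by the serial. Hence $m(A_{p,q})\le \prod_{\ell} n(R_{i_\ell,i_{\ell+1}})$, and as every factor lies in $\{0,1\}$ by hypothesis the bound is at most $1$. The value is $0$ precisely when some consecutive pair admits no interior-AVI-free path (the serial is infeasible) and $1$ when each segment is the unique element of its $R_{i_\ell,i_{\ell+1}}$, in which case the forced concatenation at the shared nodes is the single compatible trip. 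Equivalently one may induct on $k$, with base case $k=1$ being the full-sensing condition itself and the inductive step gluing a uniquely determined tail segment onto a uniquely determined prefix at their common endpoint $i_{k-1}$.

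The step I expect to be the main obstacle is the \emph{local} one: rigorously justifying that the portion of a trip lying strictly between two consecutive detections contains no interior AVI node, so that it genuinely belongs to $R_{i_\ell,i_{\ell+1}}$. This is exactly where completeness of active AVI detection is indispensable, and it is what converts the informal notion of ``consecutive records'' into the combinatorial object $R$ to which the hypothesis speaks. A secondary wrinkle is the generality $p,q\in N^{*}$ in the statement: because records occur only at AVI nodes, the observed endpoints of any serial already lie in $N^A$, so the $N^{*}$ formulation reduces to the AVI-endpoint case; any unobserved origin or destination stub before the first or after the last detection carries no LPR evidence and hence cannot affect $m(A_{p,q})$. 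Once local uniqueness and the concatenation are in place, the conclusion $m(A_{p,q})\in\{0,1\}$ follows.
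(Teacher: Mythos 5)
Your proposal is correct and follows essentially the same route as the paper: the paper's proof is precisely the one-line factorization $m(A_{p,q})=\prod n(R_{i,j})$ over consecutive record pairs, with each factor in $\{0,1\}$ by hypothesis. Your version supplies the justification the paper leaves implicit --- that completeness of active AVI detection forces each inter-record sub-path to meet $N^A$ only at its endpoints, making the (in your case, injective rather than bijective) product bound legitimate --- but the underlying decomposition is identical.
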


\begin{proof}
\begin{align*}
    \displaystyle m(A_{p,q}) & = \prod_{i,j \in A_{p_q}} n \left(R_{i,j}\right) \\
    \because      n          & \in \{0,1\} \\
    \therefore    \prod n    & \in \{0,1\}
\end{align*}
\end{proof}

\section{Closed zone theorem}\label{ap:zone}
If the traffic zone area is bounded by FSRN road segments, and for any non-FSRN segments in the zone, their connected segments are also within the zone area, then the trip of the physical road network (PRN) can be represented as parts on full-sensing road network (FSRN) separated by inner zone activities, i.e.,

\begin{thm}
Let $r^*_{o,d}=\{o, i, i+1, i+2, ..., i+m, d\}$ be a trip on a physical road network, and $Z$ a closed traffic zone on the corresponding full-sensing road network that $s_{i,i+1} \subset \Bar{Z}$.
$\forall m \geq 1$, $s_{i+m-1, i+m}$ on non-FSRN,
then $\forall m \geq 1$, $s_{i+m-1, i+m} \subset \Bar{Z}$.
($\Bar{Z}$ denotes the closure of area $Z$.)
\end{thm}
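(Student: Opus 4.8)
The plan is to prove the statement by induction on $m$, using the defining property of a closed traffic zone (Definition \ref{def:zone}) as the inductive engine. The index $m$ counts the successive off-FSRN segments the trip traverses after it enters the zone at node $i$, and the proposition to be established is $P(m): s_{i+m-1,i+m} \subset \Bar{Z}$. For the base case $m=1$, the claim $s_{i,i+1} \subset \Bar{Z}$ is exactly the standing hypothesis that the trip enters the zone across segment $s_{i,i+1}$, so nothing remains to be shown.

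For the inductive step I would assume $s_{i+m-1,i+m} \subset \Bar{Z}$ and deduce $s_{i+m,i+m+1} \subset \Bar{Z}$. By hypothesis every segment $s_{i+m-1,i+m}$ in the chain is a non-FSRN segment, and by the inductive assumption it lies in $\Bar{Z}$. Definition \ref{def:zone} then applies directly: the connected segments of any non-FSRN segment lying in the zone must themselves lie within the zone. Since consecutive segments of the trip $r^*_{o,d}$ share the intermediate node $i+m$, the next segment $s_{i+m,i+m+1}$ is connected to $s_{i+m-1,i+m}$, and therefore $s_{i+m,i+m+1} \subset \Bar{Z}$, which closes the induction.

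The main obstacle I anticipate is not the induction itself, which is routine, but pinning down the two notions the definition leaves informal. First, one must confirm that \emph{connected segments} in Definition \ref{def:zone} refers to segments sharing an endpoint node, so that a consecutive pair in the trip chain qualifies; this identification is the hinge on which the inductive step turns, and without it the step does not go through. Second, one must reconcile the closure $\Bar{Z}$ in the conclusion with the ``in the zone'' language of the definition: because the zone boundary consists of FSRN road segments, any non-FSRN segment lying in the zone lies in its interior, so its connected segments land in $\Bar{Z}$ without ambiguity. This same observation guarantees that the chain never escapes through an FSRN boundary, since every segment $s_{i+m-1,i+m}$ is assumed non-FSRN, so the induction remains valid for all $m \geq 1$.
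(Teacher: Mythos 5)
Your proposal is correct and matches the paper's own argument essentially verbatim: the paper also proceeds by induction on $m$, taking $s_{i,i+1}\subset\Bar{Z}$ as the base case and using the fact that consecutive segments intersect in the shared node $i+k$ (i.e., are connected) to invoke Definition \ref{def:zone} for the inductive step. Your added remarks on pinning down ``connected segments'' and the closure $\Bar{Z}$ are just careful glosses on the same proof, not a different route.
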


\begin{proof}
Suppose $s_{i+k-1, i+k} \subset \Bar{Z}$. According to Definition \ref{def:zone},
\begin{align*}
    s_{i+m-1, i+m}             &\subset \Bar{Z}, (m = k) \\
    \because s_{i+k-1, i+k}    &\cap s_{i+k, i+k+1} = \{i+k\} \neq \emptyset \\
    \therefore s_{i+k, i+k+1}  &\subset \Bar{Z} \\
    \Rightarrow s_{i+m-1, i+m} &\subset \Bar{Z},(m = k+1) \\
    \because s_{i,i+1}         &\subset \Bar{Z}, (m=1) \\
    \Rightarrow s_{i+m-1, i+m} &\subset \Bar{Z}, (m \geq 1)
\end{align*}
\end{proof}

\section{Passing-time inference algorithm}\label{ap:pg_alg}
\begin{algorithm}[H]
 \SetAlgoLined
  \KwResult{Accessible passing graph $P^*_{i,j}(T,E)$ }

  $T^*=\emptyset$, $E^*=\emptyset$, $P^*_{i,j}=(T^*,E^*)$ \;
  \eIf{$s_{i,j} \in S^*$}{
      \uIf{$H(s_{i,j},t_i,t_j)=1$}{
        $T^* \gets T^* \cup \{\tau_i,\tau_j\}$, $E^* \gets E^* \cup \{e_{i,j}\}$ \;
      }
  }{
    get path $r=[n_i,n_k,...n_j | i,j \in V^A, k,k+1,... \notin V^A]$\;
    $k \gets k$\;
    $T_{k-1}= \{ \tau_i \}$, $t_i \in \tau_i $\;
    \While{$k \leq j$}{%
      $T_k = \{\tau_k | G(\tau_{k-1},\tau_k)=1, \tau_{k-1} \in T_{k-1} \}$\;
      $E_k = \{e_{k-1,k} | G(\tau_{k-1},\tau_k)=1, \tau_k \in T_k, \tau_{k-1} \in T_{k-1} \}$\;
      $k \gets k + 1$\;
    }
    
    $T^*_k \gets \{\tau_j\}$, $t_j \in \tau_k$\;
  
    \While{$ k > i + 1 $}{%
      update proved edges by $E^*_{k-1,k} = \{e_{k-1,k} | e_{k-1,k} \in E_{k-1,k}, \tau_k \in T^*_k \}$ \;
      update proved phases by $T^*_{k-1} = \{ \tau_{k-1} | \tau_{k-1} \in T_{k-1}, (k-1) \in E^*_{k-1,k} \}$ \;
      $T^* \gets T^* \cup T^*_k$, $E^* \gets E^* \cup E^*_k$, $k \gets k - 1$\;
    }
  }
    
  \caption{Passing Time Inference}
\end{algorithm}

\section{Details of trajectory reconstruction}\label{ap:tra_rec}
As shown in Fig. \ref{fig:traj_rc_0}, there are two different circumstances we need to deal with when it comes to queuing discrimination.
The common idea is that the low constructed travel speed assumes a queuing behavior since the vehicle does not move during the queuing process.
For those vehicles leaving $x_j$, the travel speed is simply determined by the slope between the entry point (A) and leaving point (B), as shown in Fig. \ref{fig:traj_rc_0}.
As for vehicles from former iterations, since the exit point (G) remains unknown, the intersection ($F$) of the wave $\mu_{\tau}$ and stopping position $\overline{FH}$ is chosen as the referring point.
Hence the adapted travel speed is related to Point E and Point F.
Especially when it provides the green light period instead of the exact entry time, the end of the green light period is used as referring point. 

After the independent queuing discrimination, the result might show that several vehicles are assumed queuing before the current green light period.
For instance, let Vehicle 1,3 be the low-speed vehicles as depicted in Fig. \ref{fig:traj_rc_3}.
It is a fact that there is no more than one stop wave during one signal period.
Thus, the queuing vehicle must be in front of the other ones.
Considering the one-wave constraint, let the last low-speed vehicle be the last queuing vehicle.
In this case, Vehicle 1,2,3 would be marked as the queuing vehicles.
Their stopped positions are calculated according to their leaving orders.
The stop position of the i-th vehicle is formulated as follows,
\begin{linenomath*}
\begin{equation}\label{eq:stop_order}
    P^i_\tau = \frac{i-1}{k_j}
\end{equation}
\end{linenomath*}
where $k_j$ is the jam density.
The passing speed is related to the stopped position and the exit point.
On the other hand, the travel speed of the non-queued vehicles is calculated according to the passing information.
The reconstructed trajectory is the straight passing line to vehicles with specific entry and leaving points, such as vehicle 4.
For vehicles with one exact passing point, such as vehicles 5 and 7, the travel speed is formulated by the speed-density model \cite{may1967},
\begin{linenomath*}
\begin{equation}\label{eq:sd}
    v = v_f \cdot \left( 1 - \left(\frac{k}{k_j}\right)^\beta \right)^\alpha
\end{equation}
\end{linenomath*}
where $v_f$ is the free flow speed, and $\alpha=1.0$, $\beta=0.05$ according to relating researches \cite{Ben-Akiva2001a,Xu2014b}.
In this way, the travel speed is given based on the local density, representing the road segment's traffic dynamic.
Then their trajectories are fixed by one passing point and the running speed.
Finally, to vehicles without exact observations, their speed is also calculated by the same speed-density model, and the endpoint is given randomly with constraints of the proceeding and following vehicles. (See Vehicle 4 in Fig. \ref{fig:traj_rc_3}.)

\begin{figure}[H]
    \centering
    \begin{subfigure}{.49\textwidth}
        \centering
        \includegraphics[width=\textwidth]{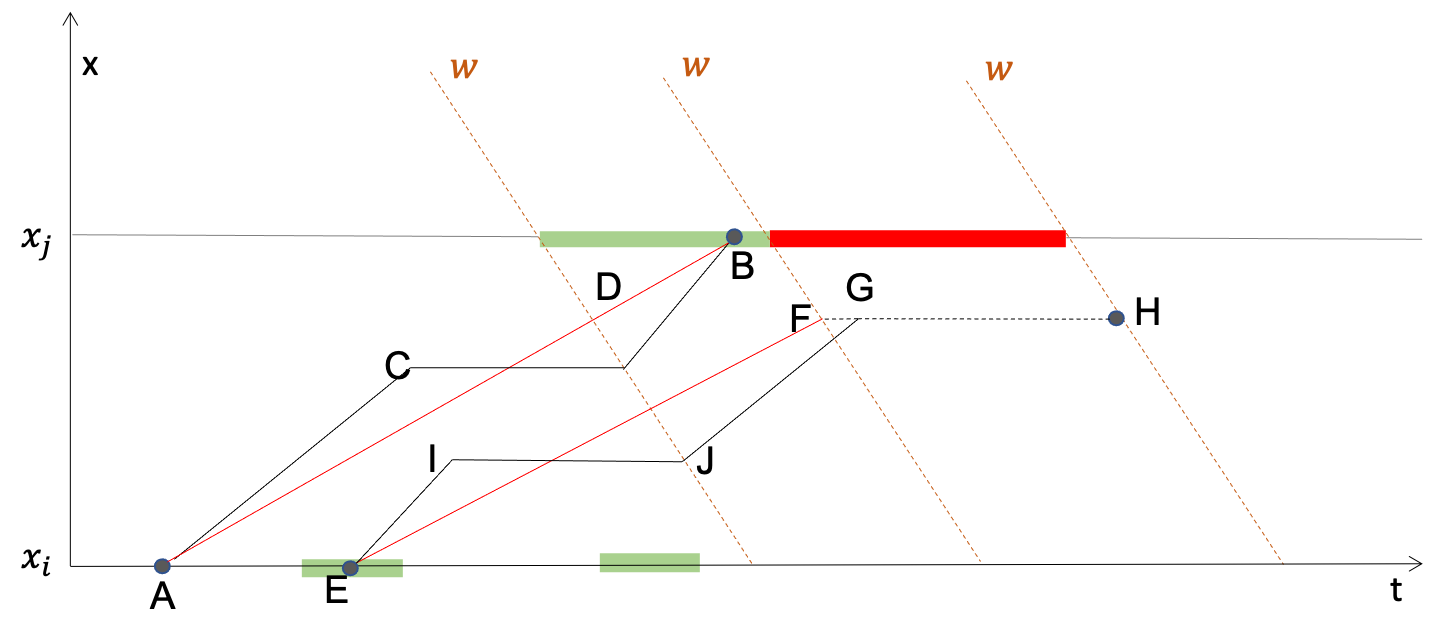}
        \caption{Illustration of queuing discrimination}
        \label{fig:traj_rc_0}
    \end{subfigure}
    \begin{subfigure}{.49\textwidth}
        \centering
        \includegraphics[width=\textwidth]{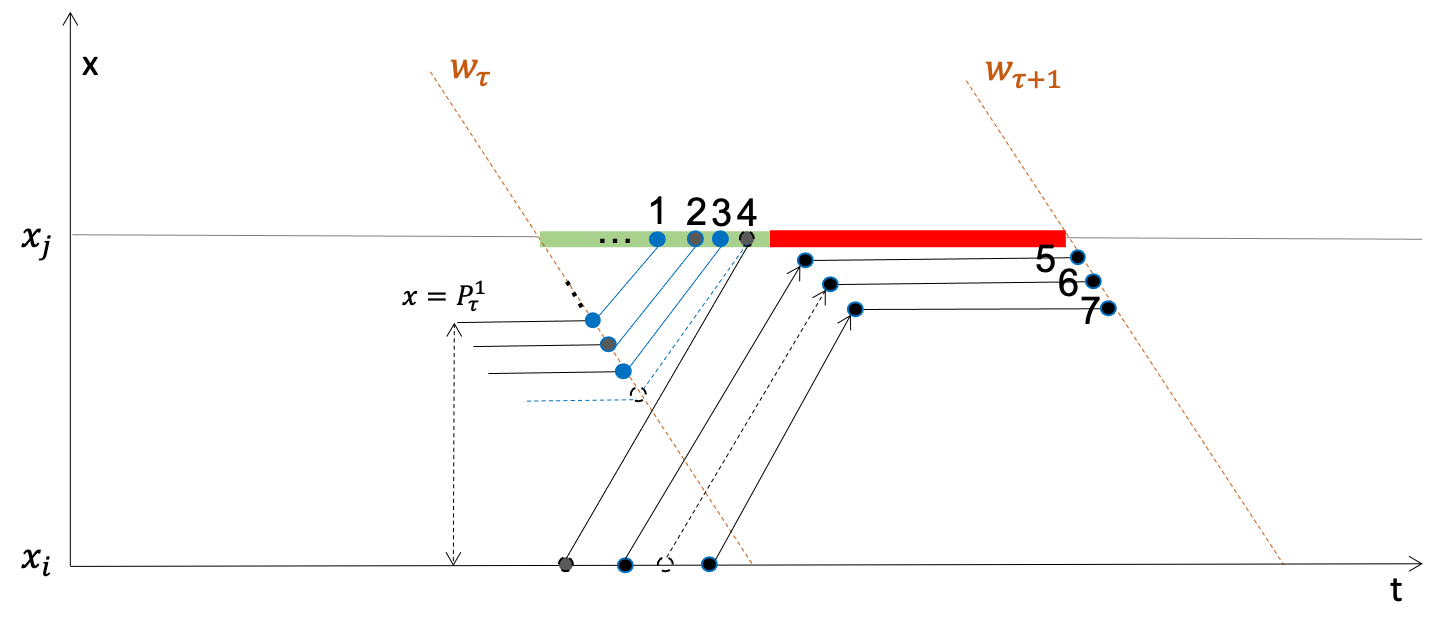}
        \caption{Illustration of queuing tail discrimination}
        \label{fig:traj_rc_3}
    \end{subfigure}
    \caption{Details of trajectory reconstruction}
    \label{fig:traj_rc}
\end{figure}

\end{appendices}

\section*{Acknowledgements}


The work was done at the SYSU Research Center of ITS in the context of the collaboration with the Joint Research and Development Laboratory of Smart Policing in Xuancheng Public Security.
The research was also supported by the National Natural Science Foundation of China (No. U1811463).

\section*{Author contributions statement}

Z.Y. conceived of the presented idea. 
Y.W. developed the theoretical framework and performed the computations.
Y.C and Y.L. contributed to the technical details of the the theory.
G.L. conducted part of the experiments.
Z.H. supervised the findings of this work. 
All authors discussed the results and contributed to the final manuscript.

\section*{Competing interests}

The authors declare no Competing interests.

\end{document}